\title{Online 2-stage Stable Matching} 
\author{Evripidis {Bampis}}{Sorbonne Universit\'e, CNRS, LIP6, F-75005 Paris, France}{evripidis.bampis@lip6.fr}{}{}
\author{Bruno Escoffier}{Sorbonne Universit\'e, CNRS, LIP6, F-75005 Paris, France \and Institut Universitaire de France}{bruno.escoffier@lip6.fr}{}{
}
\author{Paul Youssef}{Univ. Grenoble Alpes, LIG, Grenoble, France}{paul.youssef@univ-grenoble-alpes.fr}{}{
}
\authorrunning{E. Bampis, B. Escoffier and P. Youssef} 
\keywords{Stable matching; Online algorithm; 2-stage optimization} 
\newtheorem{property}{Property}
\begin{document}

\maketitle

\begin{abstract}
We focus on an online 2-stage problem, motivated by the following situation: consider a system where students shall be assigned to universities. There is a  first round where some students apply, and a first (stable) matching $M_1$ has to be computed. However, some students may decide to leave the system (change their plan, go to a foreign university, or to some institution not in the system). Then, in a second round (after these deletions), we shall compute a second (final) stable matching $M_2$. As it is undesirable to change assignments, the goal is to minimize the number of divorces/modifications between the two stable matchings $M_1$ and $M_2$. Then, how should we choose $M_1$  and $M_2$?
We show that there is an {\it optimal online} algorithm to solve this problem. In particular, thanks to a dominance property, we show that we can optimally compute $M_1$  without knowing the students that will leave the system. We generalize the result to some other possible modifications in the input (students, open positions). 
 We also tackle the case of more stages, showing that no competitive (online) algorithm can be achieved for the considered problem as soon as there are 3 stages.
\end{abstract}


\section{Introduction}

Stable matchings have been extensively studied in the lite\-rature, both from a theoretical and a practical point of view.  
In the classical stable matching problem, one is given two equal-sized sets of agents, say men and women, where each person has strict preferences over the persons of the opposite sex. The goal is to match each man to exactly one woman and each woman to exactly one man, i.e., to find a perfect matching of men and women which is also {\em stable}. A perfect matching $M$ is {\em stable} if there is no {\em blocking pair}, i.e., a pair of a man and a woman who are not matched together in $M$, but they prefer each other more to their current partners in the matching. In 1962, Gale and Shapley, in their seminal paper~\cite{galeshapley62},  showed that a stable matching always exists, and designed a polynomial-time algorithm that finds such a matching. The stable matching problem is motivated by various applications where a centralized automated matching scheme is necessary in order to assign positions to applicants (matching of interns to hospitals \cite{M16}, \cite{MMT17}, university admission \cite{BB04}, school placement \cite{APR05}, faculty recruitment \cite{BB04}, etc.). 
In most of these applications, the matching schemes employ extensions of the Gale and Shapley algorithm taking into account particular ingredients of each application, including the use of incomplete preference lists, the existence of ties,  etc. 

Given the dynamic nature of many applications, there is an increasing interest on matching-related problems in the setting of dynamic graph algorithms where vertices or edges arrive or leave over time. A first work in this direction was proposed by Khuler et al. \cite{KhullerMV94} who considered the online stable marriage problem, where one is interested in the minimization of the number of blocking pairs. More recently, some studies are concerned with scenarios closely related to stable matchings, namely rank-maximal or (near) popular matchings  \cite{BHHK15}, \cite{GKP17}, \cite{NV19}. Biro et al., in \cite{BiroCF08}, studied the dynamics of stable marriage and stable roommates markets. Another interesting work in this setting is the one by Kanade, Leonardos and Magniez \cite{KLM16} who considered a setting where at each step, two random adjacent participants in some preference list are swapped and studied the problem of maintaining a matching while minimizing the number of blocking pairs.


A series of recent works tackle the situation where one wants to maintain stability of matchings when data evolves, while trying to minimize the modifications made in the matchings, as modifying pairs are usually highly non desirable in many applications:
\begin{itemize}
\item In \cite{GOS17a}, \cite{GOS17}, \cite{GSSO19}, Genc et al. study the notion of robustness in stable matching problems by introducing $(a,b)$-supermatches. An $(a,b)$-supermatch is a stable matching such that:
if $a$ pairs break up, a new stable matching can be found by changing the partners of these $a$ pairs and at most $b$ other pairs. They also define the most robust stable matching as one that requires the minimum number of repairs (i.e., minimizes $b$) among all stable matchings. They give some  complexity results and evaluate different heuristics using simulations.
\item In \cite{ChenSS21}, Chen et al. study the concepts of robustness where a matching must be stable even if the agents slightly change their preferences, and near stability where a matching must become stable if the agents slightly  adjust their preferences. They propose a polynomial-time algorithm that finds a socially optimal robust matching (if it exists), and they show  that the problem of finding a socially optimal and nearly stable matching is computationally hard.
\item In \cite{BredereckCKLN20}, Bredereck et al. study a 2-stage incremental version of the stable matching problem in terms of parametererized complexity. More precisely, one is given: a preference profile ${\cal P}_1$ for stage one, a preference profile ${\cal P}_2$ for stage two, a stable matching $M_1$ for profile ${\cal P}_1$
and a nonegative integer $k$. The question is whether there is a stable matching for stage two, $M_2$, whose distance from $M_1$ is smaller than or equal to $k$. They also study the incremental version of the stable roommates problem. They perform a parameterized complexity analysis for both problems with respect to the "degree of change" both in the input (preference profiles) and the output (stable matchings).
\item In \cite{GajulapalliLMV20}, Gajulapalli et al. considered stable assignment in different settings of the school choice problem. As in the previous work, they consider a 2-stage problem, but here only the instance $I_1$ is known at stage one. The instance $I_2$ becomes available only at stage two. The authors consider different variants where given an optimal solution for the instance of the first stage, they seek a stable assignment  of students to schools in two settings: In the first setting, it is disallowed to reassign the school of any student matched in stage one, and in the second setting the new stable assignment must provably minimize the number of such reassignments. Depending on the considered variant, they propose polynomial-time algorithms, or NP-hardness results. 
\end{itemize}

\noindent {\bf Our Contribution.}
This article lies in this line of research, combining stability requirements and low number of modifications in dynamic stable matching problems. The main difference is that most of these works adopt a reoptimization-like framework \cite{BHK18}, where {\it the first matching is fixed} and the  question is how to modify it by respecting some given constraints. In our case, we consider a 2-stage situation and we want to compute in an online manner a pair of solutions, one for each of the two stages, minimizing the number of modifications. 
Then, our main problem is how to choose the first stable matching without knowing the future so as to minimize the number of modifications in a 2-stage setting. Our approach is hence inspired by a new trend, the  online multistage optimization framework \cite{GTW14}, \cite{BEST21} and it is closely related to the 2-stage approach followed in  \cite{LeeS20} where a two-stage matching problem is considered in which the edges of the graph are revealed in two stages and in each stage the algorithm has to immediately and irrevocably extend the matching using the edges from that stage.
Furthermore, we note 
that several admission procedures do use two-rounds (or multi-rounds procedures), for instance this is the case for national college admissions in Sweden, in Turkey, (previously) in France, or for high school admissions in New-York city (see \cite{ADUE18}, \cite{HI21} and references therein).

  Here, we focus on a 2-stage problem, motivated by the following situation: consider a system where students shall be assigned to universities. There is a  first round where some students apply, and a first (stable) matching $M_1$ has to be computed. However, some students may decide to quit the system (change their plan, go to a foreign university, or to some institution not in the system). Then, in a second round (after these deletions), we shall compute a second stable matching $M_2$. The goal is to minimize the number of divorces/modifications between the two stable matchings $M_1$ and $M_2$. Then, how should we choose $M_1$ {\bf and} $M_2$? 

This problem will be called (2-L-SMP) (for 2-stage men Leaving Stable Matching Problem) and it is formally defined hereafter. We also consider the situation where new students arrive (2-A-SMP), and the case where there might be also some modifications in the open positions  (2-LA-SMP). 

We show that, quite surprisingly, there is an {\it optimal online} algorithm to solve these problems. In particular, thanks to a dominance property, we show that we can optimally compute $M_1$ {\it without knowing the students that will leave the system}. While we focus, for the sake of clarity, in the case of one-to-one (stable) matchings, we show that this result generalizes to the more general college-admission case. 
  We then tackle the case of more time steps, showing that no competitive (online) algorithm can be achieved for the considered problem as soon as there are 3 stages.

\noindent {\bf Organization of the article.}
 We give some definitions and formally define the considered problems in Section~\ref{sec:def}. In Section~\ref{sec:opt} we tackle one version of the problem and devise an online algorithm that we prove to be optimal. Section~\ref{sec:ext} shows how the result extends to the college-admission case and to the other 2-stage problems, and provides the negative result for the generalization to more stages.

\section{Definitions}\label{sec:def}

\subsection{Stable matching}

An instance $I$ of the stable matching problem involves two disjoint sets 
$U$ (of men) and $W$ (of women). Associated to each person is a strictly ordered preference list containing all the
members of the opposite sex. 

In all the article, a matching will denote a set of pairs $(u,w)\in U\times W$ such that any person (man or woman) is in at most one pair. If $(u,w)$ is a pair in a matching $M$, then $w$ is called the partner of $u$, and vice-versa.

\begin{definition}[blocking pair]
In a matching $M$, a {\it blocking pair} is a pair $(u,w)\not\in M$ such that both $u$ and $w$ would prefer to be partners than to be matched as in $M$. More precisely:
\begin{itemize}
    \item Either $u$ is not matched in $M$, or $u$  prefers $w$ to his partner $w'$ in $M$;
    \item and either $w$ is not matched in $M$, or $w$ prefers $u$ to her partner $u'$ in $M$.
\end{itemize}
\end{definition}


\begin{definition}[Stable matching] A matching is {\it stable} if it has no blocking pair.
\end{definition}

\begin{definition}[(Best) valid partners]
A {\it valid partner} of a person is a person 
 of the opposite sex such that there exist a stable matching in which
they are partners. 
The {\it best valid} partner of a person is his/her most preferred valid partner.
\end{definition}


Gale and Shapley’s fundamental result is that every instance of the stable matching problem admits at least
one stable matching~\cite{galeshapley62}. They proved this result by designing an efficient algorithm that is
guaranteed to find such a matching. Furthermore, Gale and Shapley showed that their algorithm finds a stable
matching with a nice property, namely that it gives all the men (or all the women, if the roles of the sexes
are reversed) simultaneously their best valid partner. This stable matching is called  men-optimal  (or women-optimal). 
The men-optimal  (resp. women-optimal)
stable matching has also the property that every woman (resp. man) has their worst valid partner for the
instance.

Another interesting and useful property of stable solutions is the Rural Hospitals Theorem~\cite{vitie70}. This theorem tackles the situation where $|U|\neq |W|$ - hence some people remained unmatched - and
states that if a person is not married in one stable matching then (s)he will not be in any other stable
matching. In other words, the set of matched persons is exactly the same in any stable matching.

A natural generalization of stable matching called the Maximum Weight Stable Matching (MWSM) is studied by Mai and Vazirani \cite{MV18}. Let $I$ denote an instance of the stable matching problem over sets $U$ and $W$ of men and women, respectively.
Let $f$ be a weight function over $U \times W$. The maximum weight stable
matching problem asks for a stable matching with maximum (total) weight. In~\cite{MV18} is given an efficient combinatorial algorithm for
it.
This generalization is also studied in the field of linear programming. The stable matching problem can be formulated with a linear
system. Vande Vate~\cite{VV89} showed that this linear program describes a polytope such that all its
extreme points are integral. Thus, solving the linear program solves the weighted stable marriage problem.

\subsection{Problem definition}

\begin{definition}
(2-A-SMP) In the {\it 2-stage women-arrival stable matching problem}, we are given:
\begin{itemize}
    \item A set $U$ of men, two sets $W_1$ and $W_2$ of women with $W_1\subseteq W_2$.
    \item Each men in  (resp. women) gives his (her) preferences (total ranking)  over $W_2$ (resp. over $U$).
\end{itemize}
The goal is to compute two matchings $(M_1,M_2)$ such that:
\begin{itemize}
    \item $M_1$ is stable for $(U,W_1)$ and $M_2$ is stable for $(U,W_2)$.
    \item The number of divorces $|M_1\setminus M_2|$ is minimized.
\end{itemize}
\end{definition}
We note that $|M_1\setminus M_2|$ counts the number of pairs that disappeared (divorces). As in a stable matching problem all the stable matchings have the same size (as everyone prefers to be matched than unmatched), we could equivalently maximize $|M_1\cap M_2|$, or minimize the set of new pairs $|M_2\setminus M_1|$. 

We are interested in the online version of the problem where we have to compute $M_1$ at stage 1 while having no knowledge about $W_2$. In other words, at stage 1, we only know $U$, $W_1$, and the preferences between men in $U$ and women in $W_1$. We note that these preferences between $U$ and $W_1$ do not change between the two stages.\\

\noindent \textbf{Variants of the problem}

\begin{itemize}
    \item (2-L-SMP) 2-stage Men-Leaving Stable Matching Problem. Here, the set of women is fixed, but men are leaving: we have two sets $U_1,U_2$ with $U_2\subseteq U_1$.
    \item (2-LA-SMP), where both men are leaving, and women are arriving: we have $(U_1,W_1)$ and $(U_2,W_2)$ with $U_2\subseteq U_1$ and $W_1\subseteq W_2$.
    \item ($T$-A-SMP), ($T$-L-SMP), ($T$-LA-SMP): these are the generalizations of the previous problems where we have $T$ stages instead of 2, and the goal is to compute $T$ stable matchings $M_1,\dots,M_T$, where the number of divorces $\sum_{t=1}^{T-1}|M_i\setminus M_{i+1}|$ is minimized. For instance, in $T$-A-SMP, the set $U$ of men is fixed, and we have $T$ sets $W_1\subseteq W_2\subseteq \dots \subseteq W_T$ of women. At stage $t$, we have to compute $M_t$ while having no information on the women that will arrive in the future.
\end{itemize}

We note that we do not consider the most general case with arrivals and departures on both sides, as simple examples show that there is no (constant) competitive algorithm even for 2 stages in this case.

\section{An optimal online algorithm for {2-A-SMP}}\label{sec:opt}

We will denote $\Delta_1$ the set of stable matchings in stage 1 (with $U$ and $W_1$) and $\Delta_2$ the set of stable matchings in stage 2 (with $U$ and $W_2$).


Our algorithm mainly relies on a dominance property, which is the main technical result of this work, that allows to make an optimal choice at the first stage.

\begin{definition}
Let two stable matchings $M$ and $M'$, in a given stable matching problem over men set $U$ and women set $W$. We say that $M$ men-dominates $M'$ if for any man $u\in U$, his partner in $M$ is at least as good (according to his preferences) as his partner in $M'$.
\end{definition}

\begin{property}[Dominance property]\label{lemma:dominance}
Let $(M_1,M_2)\in \Delta_1\times \Delta_2$. Let $M'_1\in \Delta_1$ such that $M'_1$ men-dominates $M_1$. Then there exists $M'_2\in \Delta_2$ such that $|M'_1\setminus M'_2| \leq |M_1\setminus M_2|$.
\end{property}

This property (proved later on) says that we shall always prefer at stage 1 a matching $M'_1$ that men-dominates another matching $M'_2$, {\it whatever the set of women $W_2\setminus W_1$ that will arrive in the second stage, and the preferences between $U$ and $W_2\setminus W_1$}.

Based on this property, we can consider the following algorithm {\it Opt-2-Stage}:
\begin{enumerate}
    \item At stage 1, compute the men-optimal stable matching $M^*_1\in \Delta_1$.
    \item At stage 2, compute a maximal weight stable matching \cite{MV18} $M^*_2\in \Delta_2$ where the weight of $(u,w)$ is 1 if $(u,w)$ is in $M^*_1$, and 0 otherwise.
\end{enumerate}

\begin{theorem}
{\it Opt-2-Stage} is a polynomial time online algorithm which outputs an optimal solution of {2-A-SMP}.  
\end{theorem}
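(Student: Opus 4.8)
The plan is to establish three things separately: that \emph{Opt-2-Stage} is online, that it runs in polynomial time, and that its output minimizes the number of divorces. The first two are immediate. Step~1 computes the men-optimal stable matching $M^*_1$, which by Gale--Shapley can be found in polynomial time using only $U$, $W_1$ and the preferences restricted to them; in particular it uses no information whatsoever about $W_2$, so the algorithm is genuinely online. Step~2 invokes the maximum weight stable matching algorithm of Mai and Vazirani~\cite{MV18}, which is polynomial, on the (now fully revealed) stage-2 instance.

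The heart of the argument is optimality. I would fix an arbitrary optimal solution $(M_1, M_2) \in \Delta_1 \times \Delta_2$ and show $|M^*_1 \setminus M^*_2| \le |M_1 \setminus M_2|$. The first key observation is that the men-optimal matching gives every man his best valid partner simultaneously, so $M^*_1$ men-dominates \emph{every} matching in $\Delta_1$, and in particular it men-dominates $M_1$. Applying the Dominance Property (Property~\ref{lemma:dominance}) to the pair $(M_1, M_2)$ with $M'_1 = M^*_1$ then yields a matching $M'_2 \in \Delta_2$ with
\[
 |M^*_1 \setminus M'_2| \le |M_1 \setminus M_2|.
\]

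It remains to compare $M^*_2$ with this $M'_2$. The second key observation is that, with the chosen weights, the weight of any stable matching $M \in \Delta_2$ equals exactly the number of its pairs lying in $M^*_1$, i.e. $|M \cap M^*_1|$. Since $M^*_2$ maximizes this weight over $\Delta_2$, we get $|M^*_2 \cap M^*_1| \ge |M'_2 \cap M^*_1|$. Because $M^*_1$ is a fixed matching of fixed cardinality, $|M^*_1 \setminus M| = |M^*_1| - |M^*_1 \cap M|$ for every $M$, so maximizing the intersection with $M^*_1$ coincides with minimizing the number of divorces from $M^*_1$; hence $|M^*_1 \setminus M^*_2| \le |M^*_1 \setminus M'_2|$. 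Chaining the two inequalities gives $|M^*_1 \setminus M^*_2| \le |M_1 \setminus M_2|$, as desired.

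Since the Dominance Property is assumed here and is the genuine technical core of the paper, the remaining obstacle inside this proof is only conceptual: one must be careful that the men-optimal matching dominates \emph{all} of $\Delta_1$ (not merely some particular matching), and that the max-weight step really returns the best possible second-stage matching \emph{for the already-committed} $M^*_1$. Together these two facts say that committing to $M^*_1$ at stage~1 loses nothing relative to any other first-stage choice, while the max-weight computation is then optimal by construction. Once both are in place, the inequalities assemble directly and no further case analysis is needed.
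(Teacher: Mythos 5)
Your proposal is correct and follows essentially the same approach as the paper: invoke the Dominance Property with $M'_1 = M^*_1$ (justified by the men-optimality of $M^*_1$), then use the weight-$1$ construction to argue $|M^*_2\cap M^*_1|\geq |M'_2\cap M^*_1|$ and hence $|M^*_1\setminus M^*_2|\leq |M^*_1\setminus M'_2|$, and chain the inequalities. You merely spell out two steps the paper leaves implicit (that the men-optimal matching dominates all of $\Delta_1$, and the identity $|M^*_1\setminus M| = |M^*_1| - |M^*_1\cap M|$ linking weight maximization to divorce minimization), which is fine.
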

\begin{proof}
As computing a men-optimal stable matching, and a maximum weight stable matching, can be done in polynomial time, {\it Opt-2-Stage} runs in polynomial time.

Let $(M^*_1,M^*_2)$ be the solution output by {\it Opt-2-Stage}, and $(M_1,M_2)$ be an optimal solution. 

As $M^*_1$ men-dominates $M_1$, by Property~\ref{lemma:dominance} there exists $M'_2\in \Delta_2$ such that $|M^*_1\setminus M'_2|\leq |M_1\setminus M_2|$.

In the second stage, as we put weight 1 for each edge in $M^*_1$, we have that $|M^*_2\cap M^*_1|\geq |M'_2\cap M^*_1|$. Consequently,  $|M^*_1\setminus M^*_2|\leq |M^*_1\setminus M'_2|$.
Finally, $|M^*_1\setminus M^*_2|\leq |M_1\setminus M_2|$ and $(M^*_1,M^*_2)$ is optimal.
\end{proof}

In the remainder of this Section we prove the dominance property. To do so, we will need the following notion of difference graph between two matchings.

\begin{definition}[Difference graph]
Let $M$ and $M'$ be two matchings on an instance of the SMP with men set $U$ and women set $W$. The difference graph $G(M,M')$ has vertex set $U\cup W$, and edge set $M\triangle M'=(M\setminus M')\cup (M\setminus M')$. 
\end{definition}
Note that as $M$ and $M'$ are matchings, $G(M,M')$ is composed of isolated vertices and (vertex disjoint) cycles and paths, which alternate edges from $M$ and from $M'$.

\subsection{Preliminary properties}

\subsubsection{Coherence of matched persons}

When a woman arrives between stages one  and two, intuitively the ``competition'' gets harder for women, and easier for men. So in particular if a man is matched in $M_1$, he should be matched as well in $M_2$, and if a woman is not matched in $M_1$ she should be unmatched in $M_2$ as well. We prove this property\footnote{The fact that matched men remained matched is proven in Lemma 12 of \cite{GajulapalliLMV20}, we re-prove it here for completeness.}, which will be useful for simplifying the analysis later. 

\begin{property}\label{prop:matched}
Let $M_1\in \Delta_1$, $M_2\in \Delta_2$. If a man is matched in $M_1$ then he is matched in $M_2$. If a woman is unmatched in $M_1$, then she is unmatched in $M_2$.
\end{property}
\begin{proof}
Let $u_1\in U$ matched in $M_1$ with $w_1$. Suppose that $u_1$ is not matched in $M_2$, and consider the difference graph $G(M_1,M_2)$ (on $U\cup W_2$). As $u_1$ has degree 1, it is the endpoint of a chain $C$. Every woman is matched in $M_2$ (there cannot be both a man and a woman unmatched), so $C$ has an even number of edges, $C=(u_1,w_1,u_2,w_2,\dots,u_k)$ with $(u_i,w_i)\in M_1$ and $(u_{i+1},w_i)$ in $M_2$. Note that $u_k$ is not matched in $M_1$.

As $u_1$ is unmatched in $M_2$, $w_1$ prefers $u_2$ to $u_1$, otherwise $(u_1,w_1)$ would be a blocking pair for $M_2$. Then $u_2$ prefers $w_2$ to $w_1$, otherwise $(u_2,w_1)$ would be a blocking pair for $M_1$. Consequently, $w_2$ prefers $u_3$ to $u_2$, otherwise $(u_2,w_2)$ would be a blocking pair for $M_2$. With an easy recurrence, we get that $w_i$ prefers $u_{i+1}$ to $u_i$ all along the chain. Then $w_{k-1}$ prefers $u_k$ to $u_{k-1}$, and $(u_k,w_{k-1})$ is a blocking pair of $M_1$, contradiction.

Similarly, suppose that  there is a woman $w_1$ matched in $M_2$ but not in $M_1$. In $G(M_1,M_2)$ $w_1$ is a the endpoint of a chain $C$. The other endpoint of $C$ cannot be a man, as any man is matched in $M_1$ (since $w_1$ is not matched in $M_1$, and the set of men did not change). So there is a chain $C=(w_1,u_1,w_2,u_2,\dots,w_k)$ with $(w_i,u_i)\in M_2$ and $(u_{i},w_{i+1})$ in $M_1$. Then the same argument as before applies, leading to a blocking pair.
\end{proof}

\subsubsection{Reduction to regular instances}

Let us call an instance {\it regular} if $|W_1|\leq |W_2|=|U|$. We first show that we can restrict w.l.o.g. to regular instances, which will substantially simplify the case analysis for proving the dominance lemma.

\begin{lemma}
The dominance property (Property~\ref{lemma:dominance}) is true if and only if it is true on regular instances.
\end{lemma}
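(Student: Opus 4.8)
The plan is to show two directions. The forward direction (if the dominance property holds on all instances, it holds on regular ones) is trivial, since regular instances are a subclass. The substance is the converse: assuming Property~\ref{lemma:dominance} on regular instances, deduce it on an arbitrary instance. So I would start from an arbitrary instance $I$ with men set $U$, first-stage women $W_1$ and second-stage women $W_2$ (with $W_1 \subseteq W_2$), and matchings $M_1 \in \Delta_1$, $M_2 \in \Delta_2$, together with $M_1' \in \Delta_1$ that men-dominates $M_1$. The goal is to build a \emph{regular} instance $\tilde I$, together with corresponding matchings, so that applying the property on $\tilde I$ yields the desired $M_2' \in \Delta_2$ with $|M_1' \setminus M_2'| \le |M_1 \setminus M_2|$.

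The main idea is to pad the instance with dummy people so that the cardinalities line up to $|W_1| \le |W_2| = |U|$, while preserving all stable matchings and all the relevant partner relationships. Concretely, there are two discrepancies to fix. First, we may have $|W_2| < |U|$: there are more men than second-stage women, so some men are unmatched in every stable matching. I would add $|U| - |W_2|$ new dummy women to $W_2$, each ranked by every man below all of $W_2$ (so strictly worst), and each dummy woman ranking the men arbitrarily. Since these women are least preferred by everyone, by the Rural Hospitals Theorem (invoked earlier in the excerpt) the set of \emph{real} matched people is unchanged, so the set of stable matchings restricted to the original agents is exactly preserved, and the number of divorces among real pairs is unaffected. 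Second, we may have $|W_2| > |U|$: then I would symmetrically add dummy men, least-preferred by all women, to bring the men side up — though one must be careful that these dummy men are present in \emph{both} stages and do not alter $\Delta_1$ or $\Delta_2$, again by the Rural Hospitals argument. After padding I have $|W_2| = |U|$ and $W_1 \subseteq W_2$ gives $|W_1| \le |W_2|$, so $\tilde I$ is regular.

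Next I would verify that the padding commutes with the dominance hypothesis and its conclusion. The matchings $M_1, M_1' \in \Delta_1$ and $M_2 \in \Delta_2$ extend canonically to stable matchings $\tilde M_1, \tilde M_1', \tilde M_2$ on $\tilde I$ (dummies either stay unmatched or are matched among themselves in a fixed way). The men-domination $M_1' \succeq M_1$ is inherited by $\tilde M_1' \succeq \tilde M_1$, because real men keep the same partners and dummy men are indifferent or fixed. Applying the property on the regular instance $\tilde I$ yields $\tilde M_2' \in \tilde\Delta_2$ with $|\tilde M_1' \setminus \tilde M_2'| \le |\tilde M_1 \setminus \tilde M_2|$. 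I then restrict $\tilde M_2'$ back to the original agents to obtain $M_2'$, which is stable for $(U, W_2)$ (restriction of a stable matching, with dummies removed, remains stable on real agents since no real blocking pair was created by the dummies). The key quantitative point is that the divorce counts are preserved under restriction: since dummy pairs contribute identically on both sides, $|M_1' \setminus M_2'| = |\tilde M_1' \setminus \tilde M_2'|$ and $|M_1 \setminus M_2| = |\tilde M_1 \setminus \tilde M_2|$, giving $|M_1' \setminus M_2'| \le |M_1 \setminus M_2|$ as required.

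The step I expect to be the main obstacle is the careful bookkeeping in the padding construction, specifically ensuring the dummies are added consistently across both stages and that they genuinely leave $\Delta_1$ and $\Delta_2$ untouched on the real agents. The Rural Hospitals Theorem guarantees the \emph{set of matched real people} is invariant, but I must also argue that every stable matching of the original instance extends to exactly one stable matching of the padded instance and vice versa (modulo the fixed dummy behavior), so that no spurious second-stage matching is introduced or lost. A subtle point is handling the case $|W_1| > |U|$, which cannot occur once $|W_2| = |U|$ and $W_1 \subseteq W_2$, so the regularity condition $|W_1| \le |W_2| = |U|$ falls out automatically — but I would state this explicitly to close the argument.
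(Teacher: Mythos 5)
Your high-level strategy is the same as the paper's: pad the instance with dummy agents so that $|W_1|\le |W_2|=|U|$, check that stable matchings and divorce counts transfer back and forth, and apply the property on the padded (regular) instance. Your treatment of the case $|W_2|<|U|$ (dummy women added at stage 2, marrying the men unmatched at stage 2, who by Property~\ref{prop:matched} are also unmatched at stage 1 and hence contribute no divorces) is in essence the paper's argument for that case. The genuine gap is in the case $|W_2|>|U|$, precisely where you claim that the sub-case $|W_1|>|U|$ ``falls out automatically.'' It does make the padded instance regular by cardinality, but it destroys your key bookkeeping claim that ``dummy pairs contribute identically on both sides.'' Since preference lists are complete, an unmatched man and an unmatched woman always form a blocking pair; hence when $|W_1|>|U|$, the dummy men you add to \emph{both} stages cannot stay single at stage 1: in every stable matching of the padded stage-1 instance they are married to the $|W_1|-|U|$ surplus women. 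So $\tilde M_1=M_1\cup D$ and $\tilde M_1'=M_1'\cup D'$ with $D,D'\neq\emptyset$, and with ``arbitrary'' dummy preferences nothing forces these dummy pairs to survive into stage 2 (one can easily build preferences where every stable matching of the padded stage-2 instance breaks them). The inequality imported from the regular instance then only gives $|M_1'\setminus M_2'| \le |M_1\setminus M_2| + |D\setminus \tilde M_2| - |D'\setminus \tilde M_2'|$, and the error term $|D\setminus \tilde M_2|$ can be strictly positive for every admissible extension $\tilde M_2$ of $M_2$, so the desired conclusion $|M_1'\setminus M_2'|\le |M_1\setminus M_2|$ does not follow.

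The paper avoids this trap in two ways that are missing from your write-up. First, it handles $|U|<|W_1|$ \emph{before} adding anyone, by removing from both stages the set $W_0$ of women unmatched at stage 1 (by the Rural Hospitals theorem and Property~\ref{prop:matched}, these women are unmatched in every stable matching of both stages); only once $|W_1|\le|U|$ is guaranteed does it add dummy men, which then provably remain single at stage 1 and contribute zero divorces. Second, the paper does not leave dummy preferences arbitrary: it designs them (dummy men ranked last by every woman; dummy man $t_i$ and stage-2-unmatched woman $w_i'$ ranking each other first within the relevant groups) so that every stable matching of the padded stage-2 instance contains exactly the fixed pairs $(t_i,w_i')$. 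This forcing is what makes the correspondence between the stage-2 stable matchings of the two instances a clean bijection with a constant set of extra pairs, so that divorce counts transfer exactly. Your appeal to the Rural Hospitals theorem gives only invariance of the set of matched people, not of the dummy pairing itself, so it cannot by itself justify that the stable matchings restricted to the real agents are preserved with matching counts. In short: your construction can be salvaged when $|W_1|\le|U|<|W_2|$ (there the dummies are single at stage 1, so arbitrary preferences do no harm), but the case $|W_1|>|U|$ requires either the paper's removal step or a deliberately forced dummy pairing that is preserved from stage 1 to stage 2.
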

\begin{proof}
 Suppose first that $|U|<|W_1|$. Then there is a set $W_0$ of unmatched women at stage one. By the rural-hospital theorem, this set is the same for all stable matchings at stage 1. As, by Property~\ref{prop:matched}, these women are not matched at stage 2 as well, we can safely remove $W_0$ at both stages: this does not change the set of stable matchings at each stage. After doing this, we can assume 
 that $|W_1|\leq |U|$.

Now, suppose that $|W_2| > |U|$. Note that by assuming $|U|\geq |W_1|$ (thanks to the previous case) all women are matched in the first stage. Let $k=|W_2|-|U|$, and $W'=\{w'_1,\dots,w'_k\}$ be the set of unmatched women at step 2 (which are the same in any stable matching, by the rural-hospital theorem). We add a set $T=\{t_1,\dots,t_k\}$ of $k$ new men at both steps, with the following preferences:
\begin{itemize}
    \item They are ranked in the last $k$ positions of all women;
    \item Among the men in $T$, woman $w'_i$ prefers $t_i$.
    \item Each $t_i$ prefers any woman in $W'$ to any woman in $W\setminus W'$.
    \item Among the women in $W'$, $t_i$ prefers $w'_i$.
\end{itemize}

Then clearly: 
\begin{itemize}
    \item none of these new men is matched at step 1;
    \item at step 2, in any stable matching $t_i$ is matched with $w'_i$.
\end{itemize}
So, adding this set $T$ of new men (1) does not modify the set of stable matchings at step 1 and (2) add $k$ {\it new} pairs $(t_i,w'_i)$ to any stable matching at step 2. So, back to the dominance property, this modification does not modify $|M_1\setminus M_2|$ in a pair $(M_1,M_2)$ of stable matchings.


Finally, if $|W_2|<|U|$, then let $U'=\{u'_1,\dots,u'_k\}$ be the set of unmatched men at stage 2, where $k=|U|-|W_2|$. By Property~\ref{prop:matched}, these men are also unmatched at stage 1. We can add to the instance a set $W'=\{w'_1,\dots , w'_k\}$ of  dummy women at stage 2, while enforcing a perfect matching between $U'$ and $W'$ in any stable matching. This can be easily done by setting $w'_k$ to be the most preferred woman of $u'_k$ and $u'_k$ the most preferred man of $w'_k$. This modification adds the same $k$ pairs to any stable matching $M_2$ at the second stage, and does not change the number of divorces between $M_2$ and any stable matching $M_1$ at stage 1. 
\end{proof}

From now, we restrict ourselves to regular instances. In particular, any stable matching at stage 2 is a perfect matching.

\subsubsection{Paths and cycles in the difference graph}

\begin{lemma}\label{lemme:structure}
Let $M_1\in \Delta_1$ and $M_2\in \Delta_2$. 
In the difference graph $G(M_1,M_2)$ (on $U\cup W_2$), in each cycle $C$:
\begin{itemize}
    \item (Type I) Either each man in $C$ strictly prefers his partner in $M_2$ to his partner in $M_1$, and each woman strictly prefers her partner in $M_1$ to her partner in $M_2$;
    \item (Type II) Or each man in $C$ strictly prefers his partner in $M_1$ to his partner in $M_2$, and each woman strictly prefers her partner in $M_2$ to her partner in $M_1$.
\end{itemize}
Moreover, for each chain $P$:
\begin{itemize}
    \item The extremal edges both belong to $M_2$.
    \item Concerning the internal vertices in $P$ (which are matched in both matchings), each man strictly prefers his partner in $M_2$ to his partner in $M_1$, and each woman strictly prefers her partner in $M_1$ to her partner in $M_2$.
\end{itemize}
\end{lemma}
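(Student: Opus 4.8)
The plan is to read off the entire structure from stability by propagating blocking-pair contradictions along the alternating paths and cycles of $G(M_1,M_2)$. Two elementary local facts do all the work. \emph{Fact (a):} if a man $u$ strictly prefers his $M_2$-partner $w$ to his $M_1$-partner, then $w$ must strictly prefer her $M_1$-partner to $u$ --- otherwise $(u,w)\notin M_1$ would be a blocking pair for $M_1$, since $u$ prefers $w$ to his $M_1$-partner and $w$ would prefer $u$ to hers. \emph{Fact (b):} symmetrically, if a woman $w$ strictly prefers her $M_1$-partner $u'$ to her $M_2$-partner, then $u'$ must strictly prefer his $M_2$-partner to $w$, else $(u',w)\notin M_2$ would block $M_2$. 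The analogous statements with the roles of $M_1$ and $M_2$ exchanged yield the mirror facts (c) and (d). Note every vertex lying on a cycle or being an internal vertex of a chain has degree $2$ in $G(M_1,M_2)$, hence is matched in \emph{both} matchings with two distinct partners, so ``strictly prefers one to the other'' is always meaningful.

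For a cycle $C$, I would pick any man $u\in C$ (cycles alternate genders and have length $\geq 4$, so one exists) and split on his two possibilities. Suppose first $u$ prefers his $M_2$-partner. Walking along $C$ via the edge pattern $M_2, M_1, M_2,\dots$, Fact (a) forces the next woman (his $M_2$-partner) to prefer her $M_1$-partner, Fact (b) then forces the next man (her $M_1$-partner) to prefer his $M_2$-partner, and so on. Since this traversal visits every vertex of $C$ exactly once and closes up consistently (returning to $u$ with the same conclusion), every man in $C$ prefers $M_2$ and every woman prefers $M_1$: this is Type I. If instead $u$ prefers his $M_1$-partner, Facts (c)/(d) give Type II by the identical argument. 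Thus the dichotomy is exactly the consistency of the propagation direction around the cycle.

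For chains I would first pin down the endpoints using regularity. A degree-$1$ vertex is matched in exactly one of $M_1,M_2$. In a regular instance $M_2$ is a perfect matching, so every man and every woman of $W_2$ is matched in $M_2$; combined with Property~\ref{prop:matched} (a man matched in $M_1$ is matched in $M_2$, and a woman of $W_1$ matched in $M_1$ lies in $W_2$ hence is matched in $M_2$), \emph{no} vertex is matched in $M_1$ only. Hence every chain endpoint is matched in $M_2$ only, so both extremal edges lie in $M_2$, the chain has an odd number of edges, and its two endpoints are of opposite gender --- exactly one man-endpoint $u_0$ (unmatched in $M_1$) and one woman-endpoint. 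I would then start the propagation at $u_0$: its $M_2$-partner $w_1$ must prefer her $M_1$-partner to $u_0$, since $u_0$ is unmatched in $M_1$ and otherwise $(u_0,w_1)$ would block $M_1$. From $w_1$ onward Facts (a) and (b) propagate exactly as in the cycle case, establishing for every internal vertex that men prefer their $M_2$-partner and women their $M_1$-partner, until the traversal terminates at the woman-endpoint.

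The main obstacle --- and the reason the asymmetry between men and women appears --- is the starting step of the chain argument. One cannot launch the propagation from the woman-endpoint: that vertex belongs to $W_2\setminus W_1$ and is simply \emph{absent} at stage $1$, so it cannot participate in a blocking pair for $M_1$ and no initial preference constraint is available there. It is precisely the man-endpoint, unmatched but present in stage $1$, that supplies the initial blocking-pair leverage, which is why the extremal-edge lemma (forcing a man-endpoint to exist) and the reduction to regular instances are needed before the propagation can even begin.
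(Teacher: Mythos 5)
Your proof is correct and follows essentially the same approach as the paper's: the same blocking-pair propagation around alternating cycles (with the case split on one man's preference giving Types I and II), and the same use of regularity to force the extremal chain edges into $M_2$ so the propagation can be launched from the man-endpoint, who is unmatched in $M_1$. Your extra remarks (the opposite genders of the chain endpoints, and why the woman-endpoint in $W_2\setminus W_1$ provides no stage-1 leverage) are accurate refinements of the paper's argument rather than a different route.
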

\begin{proof}
Let us first consider a cycle $C=(w_1,u_1,w_2,u_2,\dots,w_k,u_k,w_1)$, where edges $(u_i,w_i)$ are from $M_1$ and the other edges from $M_2$.  

Suppose that $u_1$ prefers $w_2$ to $w_1$. Then necessarily: 
\begin{itemize}
    \item $w_2$ prefers $u_2$ to $u_1$, otherwise $(u_1,w_2)$ would have been a blocking pair for $M_1$.
    \item $u_2$ prefers $w_3$ to $w_2$, otherwise $(u_2,w_2)$ would have been a blocking pair for $M_2$.  
\end{itemize}
By an easy recurrence, we obtain that  $w_j$ prefers $u_j$ to $u_{j-1}$ for any $j=2,\dots,T$ (and $w_1$ prefers $u_1$ to $u_k$), and $u_j$ prefers $w_{j+1}$ to $w_j$. This corresponds to cycles of type I.

If $u_1$ prefers $w_1$ to $w_2$, then with a symmetric argument we obtain a cycle of type II.\\

Let us now consider a chain $P$. Thanks to the assumption that the instance is regular, $M_2$ is a perfect matching so the extremal edges must belong to $M_2$. So $P$ has an odd number of edges, and can be written as $(u_1,w_1,\dots,u_k,w_k)$, with $(u_i,w_i)\in M_2$. As $M_1$ is stable, $w_1$ prefers $u_2$ to $u_1$. Then, $u_2$ prefers $w_2$ to $w_1$, otherwise $(u_2,w_1)$ would be blocking for $M_2$. Then, by stability of $M_1$, $w_2$ prefers $u_3$ to $u_2$. By recurrence, we get that all along the chain $u_i$ ($i\geq 2$) prefers $w_i$ (his partner in $M_2$) to $w_{i-1}$ (his partner in $M_1$), and $w_i$ ($i\leq k-1)$ prefers $u_{i+1}$ to $u_i$.
\end{proof}

\subsection{Proof of the dominance property}

Now, we are able to prove the dominance property. We recall that we consider a regular instance.

\noindent\textbf{Construction of $M'_2$.} Let $M_1\in \Delta_1$, $M_2\in \Delta_2$, and $M'_1\in \Delta_1$ where $M'_1$ man-dominates $M_1$. We build $M'_2$ as follows, from the two matchings $M'_1$ and $M_2$. First, we put in $M'_2$ the set of edges $M'_1\cap M_2$ where $M'_1$ and $M_2$ agree. Then we consider the difference graph $G(M'_1,M_2)$. In this graph:
\begin{itemize}
\item For each path, we take in $M'_2$ the edges of $M_2$.
\item For each cycle  of Type I, we take in $M'_2$ the edges of $M_2$.
\item For each cycle of Type II, we take in $M'_2$ the edges of $M'_1$.
\end{itemize}

Note that $M'_2$ is a perfect matching.

We now prove that $|M'_1\setminus M'_2|\leq |M_1\setminus M_2|$ (Lemma~\ref{lemma:good}), and that $M'_2$ is stable (Lemma~\ref{lemma:stable}), which concludes the proof of the dominance property. 

\begin{lemma}\label{lemma:good}
$|M'_1\setminus M'_2|\leq |M_1\setminus M_2|$.
\end{lemma}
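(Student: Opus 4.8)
The plan is to count divorces ``by women'' and to show that the set of women divorced in $(M'_1,M'_2)$ is contained in the set of women divorced in $(M_1,M_2)$. First I record the counting identity: since the instance is regular, both $M_1$ and $M'_1$ match every woman of $W_1$ (and, by the rural hospital theorem together with Property~\ref{prop:matched}, the relevant partners compared below are all well defined), while $M'_2$ is a perfect matching. Hence each edge of $M'_1\setminus M'_2$ is carried by a unique woman of $W_1$ whose $M'_1$-partner differs from her $M'_2$-partner, and conversely; so $|M'_1\setminus M'_2|$ equals the number of such women, and likewise $|M_1\setminus M_2|$ equals the number of women whose $M_1$- and $M_2$-partners differ. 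It therefore suffices to inject the first set of women into the second, and I will do this with the identity map.

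Next I identify the women counted by $|M'_1\setminus M'_2|$ using the construction of $M'_2$ and the structure lemma (Lemma~\ref{lemme:structure}) applied to $G(M'_1,M_2)$. By construction, $M'_2$ keeps the $M'_1$-edges of $M'_1\cap M_2$ and of the Type~II cycles, and takes the $M_2$-edges on the paths and the Type~I cycles; so a woman changes partner between $M'_1$ and $M'_2$ exactly when she is an \emph{internal} vertex of a path or lies on a Type~I cycle. For every such woman, Lemma~\ref{lemme:structure} guarantees that she strictly prefers her $M'_1$-partner to her $M_2$-partner (note the path endpoint woman is unmatched in $M'_1$ and contributes no edge to $M'_1\setminus M'_2$, so she is correctly excluded).

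The crucial step is to transfer this strict preference to $(M_1,M_2)$ via the men-domination hypothesis, using the classical opposition of interests in the stable-matching lattice: since $M'_1$ men-dominates $M_1$ and both are stable on $(U,W_1)$, every woman weakly prefers her $M_1$-partner to her $M'_1$-partner. I would justify this by rerunning the argument of Lemma~\ref{lemme:structure} on $G(M_1,M'_1)$, which (as $M_1,M'_1$ match the same persons) is a union of alternating cycles: each cycle is of one of the two types, and men-domination rules out the type in which some man would strictly prefer $M_1$, so on every cycle the women prefer $M_1$. Combining the two inequalities, for each woman $w$ counted by $|M'_1\setminus M'_2|$ her $M_1$-partner is weakly better than her $M'_1$-partner, which is strictly better than her $M_2$-partner; in particular her $M_1$- and $M_2$-partners differ, so $w$ is counted by $|M_1\setminus M_2|$ as well. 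This yields the containment of the two sets of women, hence the desired inequality.

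I expect the main obstacle to be the clean justification of the opposition-of-interests step---turning the weak, man-by-man domination inequality into the women-side inequality on \emph{every} component of $G(M_1,M'_1)$---together with the bookkeeping ensuring that all compared partners exist (every divorced woman is matched in $W_1$ in $M_1$, $M'_1$, $M_2$ and $M'_2$), which is exactly where regularity and Property~\ref{prop:matched} enter.
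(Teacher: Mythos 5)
Your proof is correct, and its skeleton matches the paper's: use the construction of $M'_2$ together with Lemma~\ref{lemme:structure} on $G(M'_1,M_2)$ to identify the divorced agents (internal path vertices and Type~I cycle vertices), then use the domination hypothesis to show each such agent is also divorced between $M_1$ and $M_2$. The difference is that you count divorces on the women's side, whereas the paper counts them on the men's side, and this choice is not cost-free. The paper's man-based argument is immediate: a divorced man strictly prefers his $M_2$-partner to his $M'_1$-partner (Lemma~\ref{lemme:structure}), and men-domination says his $M'_1$-partner is weakly better than his $M_1$-partner, so the two preferences chain directly and he is divorced in $(M_1,M_2)$. Your woman-based argument needs the extra ``opposition of interests'' step --- that $M'_1$ men-dominating $M_1$ forces every woman to weakly prefer $M_1$ to $M'_1$ --- which you correctly recognize as the crux and correctly prove by analyzing $G(M_1,M'_1)$: by the rural hospitals theorem this graph is a disjoint union of alternating cycles, each cycle has all men preferring one matching and all women the other, and men-domination kills the cycles where women would prefer $M'_1$. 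So your proof is valid but strictly longer; the lesson worth retaining is that orienting the counting on the side on which the domination hypothesis is stated (here, the men) lets you skip the dualization lemma entirely. Your bookkeeping on regularity (every $W_1$-woman matched in all four matchings, path-endpoint women unmatched in $M'_1$ and hence not counted) is also accurate.
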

\begin{proof}

Let $u$ be a man who divorced between $M'_1$ and $M'_2$ - so $u$ is matched in $M'_1$, hence in $M'_2$ by Property~\ref{prop:matched} but with a different person. Then $u$ belongs to a path or to a cycle of Type I in $G(M'_1,M_2)$. Note that if he belongs to a path it is an internal vertex as he is matched in both matchings. Then, following Lemma~\ref{lemme:structure} (applied with $M'_1$ and $M_2$), in both cases $u$ (strictly) prefers his partner in $M_2$ to his partner in $M'_1$. As $M'_1$ men-dominates $M_1$, $u$ strictly prefers his partner in $M_2$ than in $M_1$. This means that $u$ also got divorced between $M_1$ and $M_2$.

So the set of divorced men between $M'_1$ and $M'_2$ is included in the set of divorced men between $M_1$ and $M_2$.
\end{proof}

\begin{lemma}\label{lemma:stable}
$M'_2$ is stable.
\end{lemma}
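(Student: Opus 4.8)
The plan is to argue by contradiction: I assume that some pair $(u,w)$ blocks $M'_2$ and show that it must already block one of the stable matchings $M_2$ or $M'_1$, contradicting their stability. Throughout I write $p_X(v)$ for the partner of $v$ in a matching $X$, and $a>_x b$ to mean that person $x$ prefers $a$ to $b$ (with ``unmatched'' counting as worst). Since the instance is regular, $M_2$ and $M'_2$ are perfect matchings, so $u$ and $w$ are both matched in $M'_2$ and the blocking condition reads simply $w>_u p_{M'_2}(u)$ and $u>_w p_{M'_2}(w)$.

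The key preliminary observation I would record is a \emph{best-for-men} property of the construction: for every man $u$, his partner $p_{M'_2}(u)$ is at least as good for him as both $p_{M'_1}(u)$ and $p_{M_2}(u)$. This follows directly from the definition of $M'_2$ together with Lemma~\ref{lemme:structure} applied to $G(M'_1,M_2)$: on paths and Type~I cycles we kept the $M_2$-edges, and there every man prefers his $M_2$-partner to his $M'_1$-partner (strictly for internal path vertices and Type~I cycles, and trivially for an endpoint man, who is unmatched in $M'_1$); on Type~II cycles we kept the $M'_1$-edges, where every man prefers his $M'_1$-partner; and on agreement edges the two partners coincide. Consequently the blocking inequality $w>_u p_{M'_2}(u)$ upgrades to \emph{both} $w>_u p_{M_2}(u)$ and $w>_u p_{M'_1}(u)$ (the latter holding vacuously if $u$ is unmatched in $M'_1$).

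Then I would split on the component of $w$ in $G(M'_1,M_2)$, which is exactly what decides whether $M'_2$ follows $M_2$ or $M'_1$ at $w$. If $w$ is isolated, on a path, or on a Type~I cycle, then $p_{M'_2}(w)=p_{M_2}(w)$, so $u>_w p_{M_2}(w)$; combined with $w>_u p_{M_2}(u)$ this makes $(u,w)$ a blocking pair of $M_2$, a contradiction. If instead $w$ lies on a Type~II cycle, then $p_{M'_2}(w)=p_{M'_1}(w)$, so $u>_w p_{M'_1}(w)$; moreover $w$ is matched in $M'_1$, hence $w\in W_1$, so that $(u,w)\in U\times W_1$ is a legitimate stage-1 pair, and together with $w>_u p_{M'_1}(u)$ it blocks $M'_1$, again a contradiction. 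In both cases $(u,w)$ is genuinely outside the relevant matching, since the strict inequality $u>_w p_{X}(w)$ forbids $p_X(w)=u$. As these two cases exhaust all possibilities for $w$, no blocking pair can exist and $M'_2$ is stable.

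I expect the main obstacle to be the asymmetry introduced by the newly arrived women: the vertices of $W_2\setminus W_1$ occur only as path endpoints, where $M'_2$ is forced to follow $M_2$, so the clean ``worst-for-women'' dual of the best-for-men property simply fails for them. The argument must therefore route any blocking pair at such a woman through the stability of $M_2$ (the first case) rather than $M'_1$, and, crucially, it may appeal to the stability of $M'_1$ only after verifying $w\in W_1$ — precisely the guarantee supplied by the Type~II-cycle case. Getting this case split to line up with which of $M'_1,M_2$ the construction copies at $w$ is the delicate point; everything else is a direct application of the best-for-men property and Lemma~\ref{lemme:structure}.
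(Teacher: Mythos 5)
Your proof is correct, and it reaches the same contradiction as the paper (a blocking pair of $M'_2$ must already block $M_2$ or $M'_1$), but the organization is genuinely different. The paper assumes a blocking pair $(u,w')$ and case-splits on which of $M'_1\setminus M_2$ and $M_2\setminus M'_1$ the two incident $M'_2$-edges (the one at $u$ and the one at $w'$) come from: it first excludes the two ``both edges from the same matching'' configurations directly, then refutes the two remaining mixed cases via Lemma~\ref{lemme:structure}. You instead factor the argument through an explicit auxiliary property: the construction is best-for-men, i.e., $M'_2$ weakly men-dominates both $M'_1$ and $M_2$. This lets you upgrade the blocking inequality at $u$ to both matchings simultaneously, so the case analysis collapses to the woman's side only --- whether $M'_2$ copies $M_2$ or $M'_1$ at $w$ --- and the agreement edges are handled uniformly instead of by a separate preliminary exclusion. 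Both routes rest on the same ingredients (Lemma~\ref{lemme:structure}, regularity, and the need to certify $w\in W_1$ before invoking stability of $M'_1$, which you get from $w$ lying on a Type~II cycle and the paper gets from $(u',w')\in M'_1$). Your version buys two things: fewer cases, and an explicit proof that $M'_2$ men-dominates $M_2$ --- precisely the fact the paper later invokes without restating (``as it can be seen in the proof of the dominance property'') when iterating the construction over $T$ stages; your formulation makes that later step self-contained.
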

\begin{proof}
Suppose that there is a blocking pair $(u,w')$ in $M'_2$. Let $w$ be the partner of $u$ and $u'$ be the partner of $w'$ in $M'_2$.

We cannot have both $(u,w)$ and $(u',w')$  in $M_2\cap M'_2$, as $(u,w')$ would be blocking for $M_2$ which is stable. Also, we cannot have both $(u,w)$ and $(u',w')$  in $M'_1\cap M'_2$ as $(u,w')$ would be blocking for $M'_1$ which is stable.

Note that $M'_2\subseteq M'_1\cup M_2$. So we are left with two possible cases:
\begin{itemize}
    \item Case 1: $(u,w)\in M'_1\setminus M_2$. Then $(u',w')\in M_2\setminus M'_1$. By construction, $(u,w)$ is in a cycle of Type II (in $G(M'_1,M_2)$) and  $(u',w')$ is in a cycle of Type I or in a path.
    \item Case 2 (vice-versa): $(u,w)\in M_2\setminus M'_1$. Then $(u',w')\in M'_1\setminus M_2$. By construction, $(u,w)$ is in a cycle of Type I or in a path, and  $(u',w')$ is in a cycle of Type II.
\end{itemize}

In the first case, by Lemma~\ref{lemme:structure},    $u$ prefers his partner in $M'_1$, i.e., $w$, to his partner in $M_2$.
If $(u,w')$ were blocking for $M'_2$, $u$ would prefer $w'$ to $w$, so he would prefer $w'$ to his partner in $M_2$. Also, $w'$ prefers $u$ to $u'$ which is her partner in $M_2$. Hence, $(u,w')$ would be blocking for $M_2$.

In the latter case, note that $u$ is matched in $M'_1$ otherwise $(u,v')$ would be blocking for $M'_1$. Then $u$ is either in a cycle of Type I or an internal vertex of a path. By Lemma~\ref{lemme:structure},
    $u$ prefers his partner in $M_2$, i.e., $w$, to his partner in $M'_1$.
If $(u,w')$ were blocking for $M'_2$, $u$ would prefer $w'$ to $w$, so he would prefer $w'$ to his partner in $M'_1$. Also, $w'$ prefers $u$ to $u'$ which is her partner in $M'_1$. Hence, $(u,w')$ would be blocking for $M'_1$.
\end{proof}

\section{Extensions}\label{sec:ext}

\subsection{University-admission case}

In the university-admission case, each $u_i\in U$ (now university) is given with a (positive integer) capacity $c_i$. Then, at most $c_i$ elements of $W$ (now students) can be assigned to the university $u_i$. Notions of stability, and classical results for stable matchings (Gale-Shapley algorithms, rural hospitals theorem,\dots), are well known to generalize to this more general setting. A way to see this is to transform an instance of the university-admission case to a standard stable matching problem as follows:
\begin{itemize}
    \item Each university $u_i$ with capacity $c_i$ is transformed into $c_i$ elements $u_i^j$, $j=1,\dots,c_i$, where each $u_i^j$ has the same preference list as $u_i$.
    \item For $w_j\in W$, we transform her preference list by replacing $u_i$ by the sequence $u_i^1 \dots u_i^{c_i}$. For instance, if the preference list of $w_j$ starts with $(u_2,u_4,\dots)$ where $u_2$ and $u_4$ have capacity 2, then it becomes $(u_2^1,u_2^2,u_4^1,u_4^2,\dots)$.
\end{itemize}
Then any stable matching in the transformed instance corresponds to a stable assignment in the initial university-admission instance.

In our two stage problem, when dealing with university-admission, we want to find a pair $(A_1,A_2)$ of assignments ($A_i$ is an assignment of students to universities at stage $i$). The goal is then to minimize the number of assignments that have been modified between the two stages (i.e., the number of students whose university has changed between the two stages). 

We note that there is a difficulty to which we have to pay attention. The number of modifications in the assignments {\it does not} correspond to the number of modifications in the transformed stable matching instance. Indeed, in this transformed stable matching instance, if $w_i$ is matched to $u_j^1$ in the first stage and to $u_j^2$ in the second stage, this corresponds to a modification in the matchings, but in both stages student $w_i$ is assigned to university $u_j$, so this is {\it not} a modification in the student-university assignment. 

However, our results extend to this general university-admission case. We sketch the proof here.
\begin{itemize}
    \item  The algorithm {\it Opt-2-stage} easily generalizes. We consider the transformed stable matching instance; stage 1 remains unchanged (we compute a man-optimal stable matching $M^*_1$). Then, if $w_i$ is assigned to $u_j^k$ in $M^*_1$, at stage 2 we put a weight 1 on all pairs $(u_j^\ell,w_i)$, $\ell=1,\dots,c_j$, as these pairs correspond to the same assignment of student $w_i$ to university $u_j$.

\item Dominance property. 
The key element that remains to be checked in the dominance property, restated in terms of modifications of assignments.  
We use the transformed (stable matching) instance, with the very same procedure to compute $M'_2$. 
It is easy to see that Property 2 and Lemma 1 are still valid, as they only rely on the fact that a person is matched or not. Also, Lemma 2 and Lemma 4 are still valid, as they deal with stability properties of the matchings. 

The central point is Lemma 3 as, as said before, the number of modifications that we count  is no more $|M1\setminus M_2|$ but the number of modifications in the underlying student-university assignments. 

Let us prove that Lemma 3 still holds. Consider that one student $w_i$ is assigned to (an occurrence of) university $u_j$ in $M'_2$ but was not assigned to (an occurrence of) university $u_j$ in $M'_1$. By property 2, $w_i$ was matched in $M'_1$, i.e., assigned to a university $u_k$ (with $u_k\neq u_j$). $w_i$ is then an internal  vertex on a path or a cycle of Type 1 in $G(M'_1,M_2)$. Then by Lemma 2 $w_j$ strictly prefers her partner in $M'_1$ to her partner in $M'_2$. Note that by construction of the preferences in the transformed instance, this means that $w_j$ strictly prefers university $u_k$ to university $u_j$. As $M'_1$ men-dominates $M_1$, $M_1$ women-dominates $M'_1$, and $w$ prefers her partner in $M_1$ to her partner in $M'_1$. In term of university assignment, $w_i$ was assigned in $M_1$ to a university at least as good (for her) as her university $u_k$ in $M'_1$. In $M_2$ she got a university $u_j$ which is strictly worse than $u_k$. So she also changed university between $M_1$ and $M_2$.  

In other words, here again, if a student got a new (different) assignment/university in $M'_2$ with respect to $M'_1$, she also got a new (different) assignment/university in $M_2$ with respect to $M_1$.
\end{itemize}

\subsection{When  men are (also) leaving}

In Section~\ref{sec:opt} we tackled the case where the set of men was fixed, and some women arrived between stages 1 and 2.

Let us now consider an instance $I$ of (2-L-SMP) where men may leave the game, i.e., the set of women $W$ is fixed, and the men set is $U_1$ at time 1, and $U_2$ at time 2 with $U_2\subseteq U_1$. Let $\overline{U}=U_1\setminus U_2=\{\overline{u}_1,\dots,\overline{u}_k\}.$

We build the following instance $I'$ of  (2-A-SMP):
\begin{itemize}
    \item The set of men is $U'=U_1$.
    \item The set of women is $W'_1=W$ at stage 1, and $W'_2=W\cup \overline{W}$, where $\overline{W}=\{\overline{w}_1,\dots,\overline{w}_k\}$.
    \item The most preferred partner of $\overline{u}_i$ is $\overline{w}_i$, and the most preferred partner of $\overline{w}_i$ is $\overline{u}_i$. 
\end{itemize}
Let $\overline{M}$ be the set of pairs $\{(\overline{u}_i,\overline{w}_i),i=1,\dots,k\}$.
\begin{lemma}
$M$ is a stable matching in the second stage of $I$ if and only if $M_2\cup \overline{M}$ is a stable matching in the second stage  of $I'$.
\end{lemma}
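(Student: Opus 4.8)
The plan is to prove both implications by a case analysis on blocking pairs, the crucial ingredient being that each pair $(\overline{u}_i,\overline{w}_i)$ consists of two agents who are mutually top-ranked. I would first isolate this as a standalone observation: in \emph{any} stable matching $N$ of the second stage of $I'$ we have $\overline{M}\subseteq N$. Indeed, if some $\overline{u}_i$ were not matched to $\overline{w}_i$ in $N$, then $\overline{u}_i$ is either unmatched or matched to a woman he likes less than his favourite $\overline{w}_i$, and symmetrically $\overline{w}_i$ is either unmatched or matched to a man she likes less than $\overline{u}_i$; in every case $(\overline{u}_i,\overline{w}_i)$ is a blocking pair, contradicting stability. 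This observation is what makes the correspondence $N\leftrightarrow N\setminus\overline{M}$ well defined, and I expect it to be the main conceptual step; the rest is bookkeeping on preferences, which are designed so that the $U_2$--$W$ rankings are inherited unchanged from $I$.

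For the forward direction, assume $M$ is stable in the second stage of $I$ (men $U_2$, women $W$) and set $N=M\cup\overline{M}$. It is a matching because $M\subseteq U_2\times W$ and $\overline{M}\subseteq\overline{U}\times\overline{W}$ use disjoint vertices. To show stability I would rule out a blocking pair $(u,w)$ by cases on the origin of $u$ and $w$. If $u\in U_2$ and $w\in W$, then the $U_2$--$W$ preferences and the partners of $u$ and $w$ are identical in $N$ and in $M$, so $(u,w)$ would already block $M$ in $I$, a contradiction. If $w=\overline{w}_i\in\overline{W}$, her partner in $N$ is her favourite $\overline{u}_i$, so she prefers no $u$ to him; and if $u=\overline{u}_i\in\overline{U}$, his partner in $N$ is his favourite $\overline{w}_i$, so he prefers no $w$ to her. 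Hence no blocking pair exists and $N$ is stable.

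For the converse, assume $N=M\cup\overline{M}$ is stable in the second stage of $I'$; by the observation $\overline{M}\subseteq N$, so $M=N\setminus\overline{M}$ lies entirely in $U_2\times W$ and is a matching for the second stage of $I$. Any blocking pair $(u,w)$ of $M$ in $I$ has $u\in U_2$ and $w\in W$, and the partners of $u$ and $w$, together with their preferences over $W$ and $U_2$, coincide in $M$ and in $N$: no man of $U_2$ is matched into $\overline{W}$ and no woman of $W$ is matched into $\overline{U}$ in $N$, precisely because $\overline{M}$ saturates $\overline{U}\cup\overline{W}$. Therefore $(u,w)$ would block $N$ as well, contradicting stability, and $M$ is stable in $I$.

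I would close by noting that the argument only uses that each $\overline{u}_i$ (resp.\ $\overline{w}_i$) ranks $\overline{w}_i$ (resp.\ $\overline{u}_i$) first and that the mutual $U_2$--$W$ preferences are copied verbatim from $I$; the remaining list entries in $I'$ may be completed arbitrarily. The single point to treat carefully is that the forced-pair observation must cover both the \emph{unmatched} and the \emph{mismatched} case for agents of $\overline{U}\cup\overline{W}$, since the second stage of $I'$ need not a priori be a perfect matching.
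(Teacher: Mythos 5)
Your proof is correct and follows the same route as the paper: the paper's (one-line) proof rests exactly on your standalone observation that every stable matching of the second stage of $I'$ must contain $\overline{M}$, since each $(\overline{u}_i,\overline{w}_i)$ is a mutually top-ranked pair and would otherwise be blocking. You simply spell out the remaining bookkeeping (that blocking pairs can then only occur inside $U_2\times W$, where preferences and partners coincide with those in $I$), which the paper leaves implicit.
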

\begin{proof}
This easily follows from the fact that any matching in the second stage of $I'$ contains $\overline{M}$.
\end{proof}
So, $(M_1,M_2)\rightarrow (M_1,M_2\cup \overline{M})$ is a one-to-one correspondence between the 
sequence of stable matchings in $I$ and $I'$. The number of divorces is precisely the same (as men matched in $\overline{M}$ are not in the second stage of $I$).\\  

The very same argument works also for the problem (2-LA-SMP) where both men leave and women arrive between the two stages (proof omitted). Hence, the following holds.

\begin{theorem}
{\it Opt-2-Stage} is a polynomial time online algorithm which outputs an optimal solution of {2-LA-SMP} (and {2-L-SMP}).  
\end{theorem}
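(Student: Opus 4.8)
The plan is to prove the final theorem by \emph{reduction}: show that the problems (2-L-SMP) and (2-LA-SMP) can be encoded as instances of (2-A-SMP), so that the already-established optimality of \emph{Opt-2-Stage} (the Theorem of Section~\ref{sec:opt}) transfers directly. The key observation is that a man leaving the system is operationally indistinguishable, from the point of view of stability, from the arrival of a private ``dummy'' woman who is that man's most-preferred partner (and he hers): in any stable matching the departing man will be matched to his dummy woman rather than compete for a real woman in $W$, thereby ``removing himself'' from the real market. This is exactly the gadget construction already set up just before the theorem statement, with $\overline{U}=U_1\setminus U_2$, the dummy women $\overline{W}$, and the forced edge set $\overline{M}$.

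The steps, in order, would be as follows. First, I would invoke the preceding Lemma establishing that $M$ is stable in the second stage of $I$ if and only if $M\cup\overline{M}$ is stable in the second stage of $I'$; its one-line justification is that every second-stage stable matching of $I'$ must contain $\overline{M}$ (since each $\overline{w}_i$ and $\overline{u}_i$ top each other's lists, any matching omitting $(\overline{u}_i,\overline{w}_i)$ has it as a blocking pair). Second, I would note that the first-stage instances of $I$ and $I'$ coincide (both have men $U_1$ and women $W$), so $\Delta_1$ is identical. Third, I would combine these to get the bijection $(M_1,M_2)\mapsto(M_1,M_2\cup\overline{M})$ between feasible solution pairs of $I$ and of $I'$, and observe that it preserves the objective: the dummy pairs in $\overline{M}$ are present in the second stage but absent in the first, so they contribute to neither $M_1$ nor to any divorce (a man of $\overline{U}$ is simply not in $I$'s second stage, and his dummy pair is newly created in $I'$). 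Hence $|M_1\setminus M_2|$ computed in $I$ equals $|M_1\setminus(M_2\cup\overline{M})|$ computed in $I'$. Because the correspondence is objective-preserving and online-compatible (the gadget is built only from information revealed when men leave, i.e.\ at stage~2), running \emph{Opt-2-Stage} on $I'$ yields an optimal online solution whose image under the inverse bijection is optimal for $I$.

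For (2-LA-SMP), where men leave \emph{and} women arrive, the same dummy-woman gadget is layered on top of the genuine arriving women: the second-stage women set becomes $W\cup\overline{W}\cup(\text{newly arrived real women})$, and the identical argument shows the bijection and objective-preservation still hold, since the dummy pairs remain forced and cost-neutral while the real arrivals are handled exactly as in (2-A-SMP). I would state this as the ``same argument'' remark the paper already makes, without re-deriving the stability bookkeeping.

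The main obstacle, and the point I would check most carefully, is the claim that the objective is \emph{exactly} preserved rather than merely bounded. One must be sure that no divorce is spuriously created or hidden by the gadget: a departing man $\overline{u}_i$ is matched to some real woman in $M_1$ but to his dummy $\overline{w}_i$ in $M_2\cup\overline{M}$, so he \emph{does} contribute a divorce in $I'$ — and this must correspond to a genuine modification in $I$. The resolution is that in $I$ the man $\overline{u}_i$ has left, so whatever real woman $w$ he was matched to in $M_1$ is either unmatched or re-matched in $M_2$, which is precisely a modification counted in $I$'s objective; the gadget merely relabels $\overline{u}_i$'s half of that divorce. Verifying that this relabeling is a clean bijection on divorces, with no double-counting, is the delicate bookkeeping step, and it is what the phrase ``the number of divorces is precisely the same'' in the excerpt is asserting.
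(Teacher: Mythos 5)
Your proposal is correct and follows essentially the same route as the paper: reduce (2-L-SMP) and (2-LA-SMP) to (2-A-SMP) via the dummy-women gadget, use the lemma that every second-stage stable matching of $I'$ contains $\overline{M}$ to get the bijection $(M_1,M_2)\mapsto(M_1,M_2\cup\overline{M})$, and observe that since $\overline{M}\cap M_1=\emptyset$ the divorce count is exactly preserved, so optimality of \emph{Opt-2-Stage} transfers. Your explicit bookkeeping of the departing men's divorces and the online-compatibility remark simply spell out what the paper's terser "the number of divorces is precisely the same" asserts.
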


\subsection{No competitive algorithm for more stages}

A natural extension is to consider the problem on a larger number of stages. With more than two stages, is it still possible to find an optimal online algorithm? or at least a competitive online algorithm? 

We answer negatively to this question, by showing that it is not the case already for 3 stages.

\begin{theorem}
For 3 stages, for any $c$, there is no online $c$-competitive (deterministic) algorithm for 3-A-SMP.
\end{theorem}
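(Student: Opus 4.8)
The plan is to exhibit, for every constant $c$, a family of instances on which any deterministic online algorithm is forced to make at least $c$ times as many divorces as the offline optimum. The key leverage is the online constraint: at stage~$1$ the algorithm must commit to $M_1$ knowing only $(U,W_1)$, and similarly at stage~$2$ it must commit to $M_2$ knowing only $(U,W_1,W_2)$, before seeing $W_3$. The idea is to design the stage-$1$ data so that there are (at least) two ``reasonable'' candidate matchings $M_1$ that look symmetric from the algorithm's point of view, and then let an adversary reveal the stage-$2$ and stage-$3$ women in whichever way punishes the committed choice. Because the algorithm is deterministic, the adversary can simulate it, see which $M_1$ (and then which $M_2$) it picks, and respond accordingly.

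Concretely, I would first build a stage-$1$ instance that admits several distinct stable matchings that are ``far apart'' (share few or no pairs), so that whatever the algorithm commits to, there is an alternative stable matching it could have chosen. The canonical source of many stable matchings is a block of $n$ men and $n$ women whose preferences form a ``cyclic'' or Latin-square pattern, giving $\Theta(2^{n})$ or at least many pairwise-disjoint stable matchings. Next I would specify the second-stage arrivals $W_2\setminus W_1$ and their preferences so that, conditioned on the committed $M_1$, the algorithm is again forced into a binary-style choice for $M_2$, with the two options incomparable. Then the third-stage arrivals $W_3\setminus W_2$ are chosen, depending on the already-revealed $M_2$, so that the stability requirement at stage~$3$ propagates a large cascade of divorces precisely when $M_2$ was the ``wrong'' option, while the offline optimum — which may choose $M_1,M_2$ in hindsight knowing all three stages — incurs only a constant (or zero) number of divorces. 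Summing the forced divorces over a chain of $\Theta(c)$ such gadgets, or amplifying a single gadget by scaling $n$, drives the competitive ratio above $c$.

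The adversary argument itself should be phrased as a decision tree / two-player game: I would describe the adversary's strategy as a function of the algorithm's revealed commitments, then argue (i) that every branch yields a valid sequence of instances with $W_1\subseteq W_2\subseteq W_3$ and consistent preferences, (ii) that on each branch the online cost is at least $\Omega(c)\cdot\mathrm{OPT}$, and (iii) that the corresponding offline optimum is small, by exhibiting explicit hindsight matchings $(M_1^{\mathrm{opt}},M_2^{\mathrm{opt}},M_3^{\mathrm{opt}})$ tailored to the realized instance. Determinism is what makes step~(i) legitimate, since the adversary can predict the algorithm's moves; for a randomized lower bound one would instead invoke Yao's principle, but the statement only claims a deterministic bound so the direct adversary suffices.

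The main obstacle I anticipate is the \emph{stability bookkeeping across three stages simultaneously}. It is easy to force one binary choice, but I must ensure that the stage-$3$ arrivals genuinely \emph{force} a large number of divorces under the stability constraint on $M_3$ — that the algorithm cannot quietly absorb the new women without disturbing many existing pairs — while at the same time guaranteeing that a prescient offline solution can align $M_1,M_2,M_3$ to avoid almost all of them. Getting both to hold requires the newly arriving women to have preferences that block exactly the committed pairs (via well-placed blocking-pair threats) yet leave an alternative stable configuration intact. Verifying the absence of unintended blocking pairs in all three matchings, on every adversary branch, is the delicate part; I would isolate a single self-contained gadget, prove its stability and its forced-divorce property in a lemma, and then obtain the theorem by concatenating or scaling copies of the gadget.
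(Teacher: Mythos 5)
Your proposal is a strategy outline rather than a proof, and the gap is precisely the part you yourself flag as "the delicate part" and defer: no concrete gadget is ever constructed, no preference lists are written down, and none of the stability claims (which matchings are stable at each stage, that the stage-3 arrivals force a unique stable matching incurring many divorces, that the offline alternative remains stable after the arrivals) is verified. Since the entire content of the theorem is the existence of such a gadget, a proof that says "I would build a block with many stable matchings, then choose arrivals that punish the committed choice, and prove stability in a lemma" has not yet proved anything. The adversary/decision-tree framing, the use of determinism, and the idea of scaling a parameter to beat any $c$ are all correct and match the paper's approach in spirit, but they are the easy, generic part of the argument.

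Two more specific issues. First, your emphasis on a stage-1 instance with many "far apart" stable matchings is misdirected: by the dominance property proved in this same paper (and its corollary for $T$ stages), there is always an optimal solution using the men-optimal matching at stage 1, so stage-1 multiplicity gives the adversary no real leverage. The paper instead makes stage 1 trivial (a single woman, hence a unique stable matching $M_1=(u_2,w_1)$) and locates the single binary choice at stage 2, using a cyclic-preference block on $n-1$ men and women that has exactly two stable matchings, the men-optimal $\{(u_i,w_i)\}$ and the women-optimal $\{(u_{i+1},w_i)\}$, which are pairwise disjoint. Second, you miss that one adversary branch can be made to have $\mathrm{OPT}=0$: if the algorithm abandons $(u_2,w_1)$ at stage 2, the adversary reveals a stage-3 woman whose arrival leaves the women-optimal matching extendable with no divorce, so the algorithm's cost $\geq 1$ against optimum $0$ already defeats every finite $c$ on that branch; on the other branch (algorithm keeps $(u_2,w_1)$ by taking the women-optimal matching), the stage-3 woman is given preferences making the men-optimal matching plus $(u_n,w_n)$ the unique stable matching, forcing $n-1$ divorces against an optimum of $1$. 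Writing down those stage-3 preference lists and checking the claimed uniqueness/stability is exactly the bookkeeping your proposal postpones, and it is where all the work lies.
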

\begin{proof} 
We build an instance with 3 stages, a unique stable matching at time 1, two stable matchings at time 2. The third stage depends on the choice of the algorithm at time 2.

As a building block in our construction, let us consider the following instance with $n-1$ men and women, and the following cyclic preferences:

\medskip

\begin{center}
\begin{tabular}{|l|l|}
\hline
{\bf Men} & {\bf Women} \\
\hline
$u_1: w_1 w_2 \dots w_{n-1}$ & $w_1: u_2 u_1 u_{n-1} \dots u_3$ \\
$u_2: w_2 w_3 \dots w_1$ & $w_2: u_3 u_2 \dots u_4$ \\
$\dots$ & $\dots$ \\
$u_{n-1}:w_{n-1} w_1 \dots w_{n-2}$ & $w_{n-1}:w_{1} w_{n-1} w_{n-2} \dots w_{2}$ \\
\hline
\end{tabular}
\end{center}

\medskip

It is not hard to see that there are only 2 stable matchings: the men optimal $M^h_2$ made of $(u_i,w_i)$ for all $i$, and the women optimal $M^f_2$ made of $(u_{i+1},w_i)$.\\ 

Now we can describe the first two stages of the instance, (the 3-stage instance contains $n$ men and respectively 1, $n-1$, and $n$ women at each of the 3 stages). 
\begin{itemize}
                \item At stage 2, we have $n$ men and $n-1$ women. The preferences are as in the previous instance, plus $u_{n}$ with preferences $w_1 w_2 \dots w_n$. $u_n$ is ranked last by every woman, $w_i,i=1,\dots,n-1$. As $u_n$ is in no stable matching, there are only 2 stable matchings, $M^h_2$ and $M^f_2$. 
                \item At stage 1, there is only the woman $w_1$, and then only one stable matching $M_1=(u_2,w_1)$.
\end{itemize}

At stage 1 the algorithm has no choice as there is a unique stable matching. At stage 2 it can choose either $M^h_2$ or $M^f_2$.

{\bf Case 1.} Suppose that it chooses $M^h_2$. Then it makes one divorce between stages 1 and 2 (pair $(u_2,w_1)$), while $M^f_2$ makes no divorce. We give an instance at stage 3 where we can maintain all the pairs in $M^f_2$. 

To do this, at stage 3 where woman $w_n$ arrives, we put $u_{n}$ in first position in the ranking of $w_n$: then $M^f_2$ plus the pair $(u_n,w_n)$ is stable (all the women have their first choice). So there is a solution with no divorce (with value 0), while the algorithm made at least one divorce.

{\bf Case 2.} Suppose that it chooses $M^f_2$.
Then we force the algorithm to change everything at stage 3, where there will be a unique stable matching, the men optimal one.

To do this, at stage 3 where woman $w_n$ arrives, we put $w_{n}$ in second position in the ranking of men $u_1,\dots,u_{n-1}$. $w_n$ is in last position for $u_n$. The preference of $w_n$ is $u_1 u_2 \dots u_{n}$. 

Note that the partner of $u_{n}$ is  $w_n$ in all stable matchings. Indeed, suppose that his partner is  $w_i,i<n$. $u_n$ is ranked last by $w_i$, so $w_i$ prefers $u_i$, and $w_i$ is the first choice of $u_i$, so $(u_i,w_i)$ is a  blocking pair, contradiction.

Then for any $i<n$ the partner of $u_i$ is $w_i$ in any stable matching. Indeed, if $u_i$ were matched with $w_j,j\neq i$, then $u_i$ would prefer $w_{n}$ to $w_j$, and $w_n$ prefers anyone to her husband $u_n$, so $(u_i,w_n)$ would be a blocking pair.

So the matching $(u_i,w_i)$ for all $i$ is the unique stable matching at stage 3. The algorithm makes $n-1$ divorces (between stages 2 and 3), while taking $M^h_2$ at stage 2 allows to make only 1 divorce in total (between stage 1 and 2, with no divorce between stage 2 and 3).
\end{proof}
As a remark, this example actually shows that no $(n-1-\epsilon)$-competitive algorithm exists.

We note however that, interestingly, the dominance condition still holds. Indeed, we have the following result. Let $\Delta_t$ be the set of stable matchings at stage $t$.

\begin{lemma}
Let $(M_1,M_2,\dots,M_T)\in \prod_{t=1}^T \Delta_t$. Let $M'_1\in \Delta_1$ such that $M'_1$ men-dominates $M_1$. Then there exists $(M'_2,\dots,M'_T)\in \prod_{t=2}^T \Delta_t$ such that $\sum_t |M'_t\setminus M'_{t+1}| \leq \sum_t |M_t\setminus M_{t+1}|$.
\end{lemma}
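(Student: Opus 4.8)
The plan is to reduce the multi-stage statement to a repeated, purely sequential application of the two-stage dominance property, Property~\ref{lemma:dominance}. Having fixed $M'_1$, I build $M'_2$, then $M'_3$, and so on, each time comparing only two consecutive stages. Concretely, I would construct $M'_{t+1}$ from the pair $(M'_t, M_{t+1})$ by exactly the procedure used in the proof of Property~\ref{lemma:dominance}, treating stage $t$ as ``stage 1'' and stage $t+1$ as ``stage 2'' (this is legitimate since $W_t \subseteq W_{t+1}$ and the men set is fixed, so each consecutive pair is an instance of the women-arrival setting). This immediately yields $|M'_t \setminus M'_{t+1}| \le |M_t \setminus M_{t+1}|$ for every $t$, and summing over $t = 1, \dots, T-1$ gives the desired bound $\sum_t |M'_t \setminus M'_{t+1}| \le \sum_t |M_t \setminus M_{t+1}|$.

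For this scheme to be valid, the hypothesis of Property~\ref{lemma:dominance} must hold at each step, i.e.\ $M'_t$ must men-dominate $M_t$. This is true at $t=1$ by assumption, so the crux is to propagate domination along the stages: I must show that the matching $M'_{t+1}$ produced by the construction men-dominates $M_{t+1}$. I would extract this as a strengthening of Property~\ref{lemma:dominance}, read off directly from its proof. Recall that $M'_2$ is built from $M'_1$ and $M_2$ by keeping the edges of $M_2$ on $M'_1 \cap M_2$, on paths, and on Type I cycles of $G(M'_1, M_2)$, while keeping the edges of $M'_1$ on Type II cycles. Fix a man $u$: on the first three parts his partner in $M'_2$ equals his partner in $M_2$, and on a Type II cycle, Lemma~\ref{lemme:structure} tells us $u$ strictly prefers his $M'_1$-partner (which becomes his $M'_2$-partner) to his $M_2$-partner. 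Hence in every case $u$'s partner in $M'_2$ is at least as good as in $M_2$, i.e.\ $M'_2$ men-dominates $M_2$. Shifted to indices $t, t+1$, this is precisely the propagation I need.

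With this strengthened statement in hand, the argument becomes a clean finite induction on $t$, maintaining the invariant ``$M'_t \in \Delta_t$ and $M'_t$ men-dominates $M_t$''. At each step I apply the strengthened Property~\ref{lemma:dominance} to $(M_t, M_{t+1})$ and the dominating $M'_t$ to obtain $M'_{t+1} \in \Delta_{t+1}$ that simultaneously satisfies $|M'_t \setminus M'_{t+1}| \le |M_t \setminus M_{t+1}|$ and men-dominates $M_{t+1}$, thereby restoring the invariant. After $T-1$ steps I have the full sequence $(M'_2, \dots, M'_T)$, and summing the per-stage inequalities finishes the proof.

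The only genuine obstacle is this propagation of domination: the statement of Property~\ref{lemma:dominance} as written controls only the number of divorces, not the quality of the partners, so one cannot invoke it as a pure black box and must instead inspect its construction. Once one does so and uses Lemma~\ref{lemme:structure}, the men-domination is immediate. A minor point to verify is that the reduction to regular instances internal to that proof does not disturb domination on the original men: the added dummy agents are matched identically in all stable matchings and the original men keep their partners, so the strengthened property holds on general instances and the induction goes through unchanged.
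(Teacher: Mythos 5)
Your proposal is correct and follows essentially the same route as the paper: iterate the two-stage construction, using the fact (read off from the proof of Property~\ref{lemma:dominance}, not its statement) that the constructed $M'_{t+1}$ men-dominates $M_{t+1}$, which restores the hypothesis for the next stage. In fact, you spell out the propagation-of-domination step via Lemma~\ref{lemme:structure} that the paper only asserts with ``as it can be seen in the proof,'' so your write-up is, if anything, more complete than the paper's.
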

\begin{proof}
Starting from $M'_1$, we build as in Lemma~\ref{lemma:dominance} a stable matching $M'_2\in \Delta_2$ such that $|M'_2\setminus M'_{1}| \leq  |M_2\setminus M_{1}|$.
As it can be seen in the proof of Lemma~\ref{lemma:dominance}, $M'_2$ men-dominates $M_2$. So we can apply again Lemma~\ref{lemma:dominance} to build a stable matching $M'_3\in \Delta_3$ such that $|M'_3\setminus M'_{2}| \leq  |M_3\setminus M_{2}|$. By an easy recurrence we build the sequence $M'_t,t=2,\dots,T$.
\end{proof}

As a corollary, as the men-optimal matching dominates all other stable matchings, we get the following. 

\begin{corollary}
There always exists an optimal solution that chooses the men-optimal matching at stage 1.
\end{corollary}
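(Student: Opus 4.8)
The plan is to reduce the claim to the multistage dominance lemma just established, instantiating its hypothesis with the men-optimal matching of stage~1. First I would fix an arbitrary optimal solution $(M_1,\dots,M_T)\in\prod_{t=1}^T\Delta_t$ and let $M^*_1\in\Delta_1$ be the men-optimal stable matching for stage~1.

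The key point is that $M^*_1$ men-dominates \emph{every} stable matching in $\Delta_1$. This is precisely the best-valid-partner characterization of the men-optimal matching recalled in Section~\ref{sec:def}: by the Gale--Shapley theorem, $M^*_1$ gives each man $u\in U$ his most preferred partner among all stable matchings of stage~1. Hence for every $M_1\in\Delta_1$ and every man $u$, the partner of $u$ in $M^*_1$ is at least as good (for $u$) as his partner in $M_1$, which is exactly the definition of men-domination. In particular $M^*_1$ men-dominates the first-stage matching $M_1$ of our chosen optimal solution.

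I would then apply the preceding Lemma with $M'_1=M^*_1$. It produces a sequence $(M'_2,\dots,M'_T)\in\prod_{t=2}^T\Delta_t$ such that, writing $M'_1=M^*_1$, the full solution satisfies $\sum_t|M'_t\setminus M'_{t+1}|\le\sum_t|M_t\setminus M_{t+1}|$. Since $(M_1,\dots,M_T)$ was optimal, the right-hand side is the optimal value, so the inequality forces equality and $(M^*_1,M'_2,\dots,M'_T)$ is itself optimal. As this solution uses the men-optimal matching at stage~1, the claim follows.

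There is essentially no difficult step here: all of the work is hidden in the dominance lemma, and the only fact to verify by hand is the classical observation that the men-optimal matching dominates all stable matchings of the same instance, a direct consequence of the best-valid-partner property. The single point requiring mild care is to align the index conventions in the objective $\sum_t|M_t\setminus M_{t+1}|$ so that the inequality returned by the Lemma compares the two complete $T$-stage solutions term by term; once this is done, optimality transfers immediately.
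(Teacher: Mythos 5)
Your proof is correct and follows exactly the paper's intended argument: the paper derives this corollary from the multistage dominance lemma together with the classical fact that the men-optimal matching men-dominates every stable matching of the same instance, which is precisely your instantiation $M'_1=M^*_1$. Nothing is missing, and the attention you pay to the index conventions in $\sum_t|M_t\setminus M_{t+1}|$ is fine but not a point of divergence.
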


\section{Conclusion}
We showed in this article that the considered 2-stage stable matching problems admit an optimal online algorithm. While such an optimal online algorithm does not exist for more than 2 stages in the considered model, studying stable matching problems on more stages seems to be an interesting research direction. For instance, we can think of using randomized online algorithms to reach (asymptotic) competitive ratios, or make further assumptions on the model -- for instance in several online matching problems people arrive one by one in the game. The study of the off-line problem could be also of interest, as well as extensions of the results to a more general preference model (with ties, incomplete preferences,\dots).

\bibliography{biblio}

\end{document}